\def\ket#1{\mathinner{|{#1}\rangle}}
  \gdef\Braket#1{\left<\mathcode`\|"8000\let|\BraVert {#1}\right>}}
\def\BraVert{\egroup\,\mid@vertical\,\bgroup}
  \gdef\set#1{\mathinner{\lbrace\,{\mathcode`\|"8000\let|\midvert #1}\,\rbrace}}
  \gdef\Set#1{\left\{\:{\mathcode`\|"8000\let|\SetVert #1}\:\right\}}}
\def\midvert{\egroup\mid\bgroup}
\def\SetVert{\egroup\;\mid@vertical\;\bgroup}
\newtheorem{theorem}{Theorem} 
\newtheorem{proposition}[theorem]{Proposition} 
\newcommand{\oort}{\frac{1}{\sqrt{2}}}
\newcommand{\twoVec}[2]{\begin{pmatrix}#1\\#2\end{pmatrix}}
\newcommand{\IR}{{\mathbb{R}}}
\newcommand{\IC}{{\mathbb{C}}}
\begin{document}

\title{The Deutsch-Jozsa Problem: De-quantisation and Entanglement}
\author{Alastair A. Abbott\\ Department of Computer Science, University of Auckland,\\ 38 Princes St, Auckland 1010, New Zealand\\\texttt{aabb009@aucklanduni.ac.nz}}  
\maketitle

\begin{abstract}
The Deustch-Jozsa problem is one of the most basic ways to demonstrate the power of quantum computation. Consider a Boolean function $f : \{0,1\}^n \to \{0,1\}$ and suppose we have a black-box to compute $f$. The Deutsch-Jozsa problem is to determine if $f$ is constant (i.e. $f(x) = \text{const, } \forall x \in \{0,1\}^n$) or if $f$ is balanced (i.e. $f(x) = 0$ for exactly half the possible input strings $x \in \{0,1\}^n$) using as few calls to the black-box computing $f$ as is possible, assuming $f$ is guaranteed to be constant or balanced. Classically it appears that this requires at least $2^{n-1}+1$ black-box calls in the worst case, but the well known quantum solution solves the problem with probability one in exactly one black-box call. It has been found that in some cases the algorithm can be de-quantised into an equivalent classical, deterministic solution. We explore the ability to extend this de-quantisation to further cases, and examine with more detail when de-quantisation is possible, both with respect to the Deutsch-Jozsa problem, as well as in more general black-box algorithms.
\end{abstract}

\section{Introduction}
Deutsch's problem and the more general Deutsch-Jozsa problem were some of the first problems tackled in the field of quantum computing. They are simple, but are sufficiently non-trivial to be of interest. The generally accepted quantum solutions contain aspects of quantum parallelism, interference and entanglement, which are commonly cited as the main tools which give quantum computing its power. In this paper, we build on a previous result~\cite{Calude:2007p542} that shows Deutsch's original problem is solvable classically. We extend this note to the general Deutsch-Jozsa problem in an attempt to explore what is fundamentally important to quantum computation.

\subsection{Preliminaries}
In order to be able to talk about the differences between classical and quantum algorithms, we need define them in a way which captures their differences in a constructive manner. A \textit{classical algorithm} is one which is computed by a (probabilistic) Turing Machine or equivalent computational model, while a \textit{quantum algorithm} is one which is computed by a uniformly generated sequence of circuits $G=G_{T(n)}\dots G_1$, where each gate $G_i$ is either a unitary gate (chosen from a fixed set), or a measurement gate~\cite{Abbott:2010fk}.

Most quantum algorithms have a trivial classical counterpart via the simulation of their matrix-mechanical formulation. As long as one is careful,\footnote{If incomputable coefficients were used in the algorithm this simulation would clearly run into trouble, although it is not clear if such situations will arise in quantum computation. Alternatively, it has been shown quantum value indefiniteness allows quantum sources to generate incomputable sequences of bits~\cite{Calude:2008aa}; quantum computers can be used as such a source and trying to simulate this behaviour will clearly fail on a classical computer.} equivalent classical algorithms can be readily obtained by these means~\cite{Ekert:1998aa}. However, the dimension of Hilbert space (and thus the representation of the state) grows exponentially with the number of qubits used in a quantum algorithm, so a classical counterpart obtained by the trivial means takes space and time that is exponentially larger than the quantum algorithm does, leaving such simulations inefficient. In this paper we examine the existence of more efficient classical counterparts to quantum algorithms that are not exponential in time or space compared to the quantum algorithm. Some previous work in this has been done by Jozsa and Linden, who looked at conditions for simulating the matrix formulation of an algorithm efficiently~\cite{Jozsa:2003p1766}. We will look at the same issue from a slightly different angle with respect to the Deutsch-Jozsa problem, as well as in a more general black-box situation.

Throughout this paper we will use the shorthand notion $\ket{+}$ and $\ket{-}$ to represent the symmetric and antisymmetric equal superpositions of the basis states, so that the Hadamard gate $H$ has the following effect: $$H\ket{0} = \frac{1}{\sqrt{2}}(\ket{0}+\ket{1})=\ket{+}, \quad H\ket{1}=\frac{1}{\sqrt{2}}(\ket{0} - \ket{1}) = \ket{-}.$$

\subsection{Oracle Quantum Computations and Embeddings}

The main subject of this paper, the Deutsch-Jozsa problem (and also Deutsch's problem) is a form of an oracle computational problem~\cite{Berthiaume:1994la,Kitaev:2002it}. This means that the input is given to us as a black-box and the goal is to determine something about this black-box. It is important that this information can only be obtained by asking the oracle allowable questions. It must not be the case that examination of the black-box structure alone allows insight into the nature of the black-box~\cite[p. 554]{Deutsch:1992p560}.

In the current literature it is usually implicitly assumed that solving a problem with a classical black-box is equivalent to solving it with a quantum, or alternative kind of black-box. In general, a standard classical black-box can operate on classical bits only (0 or 1), while a quantum black-box can operate on any state in two-dimensional Hilbert space ($\mathcal{H}_2$). This difference in some sense appears to add extra complexity and power to the quantum black-box, and it is not clear that solutions can easily be compared without taking this into account. Some questions relating to this issue are developed more fully in~\cite{Abbott:2010fk}, and ongoing joint work with the University of York involves studying this in a physical Nuclear Magnetic Resonance implementation to gain further insight. In this paper we will adopt the standard stance and work with the natural generalisation of the classical black-box.

\section{Deutsch's Problem}

The original problem proposed by Deutsch~\cite{Deutsch:1985p605} is formulated as follows: consider a Boolean function $f : \{ 0,1\} \to \{0,1\}$, and suppose we are given a black-box (oracle) to compute $f$. Deutsch's problem is to determine if $f$ is constant (i.e. $f(0) = f(1)$) or balanced (i.e. $f(0) \neq f(1)$) in as few as possible calls to the black-box computing $f$.

\subsection{Quantum Solution}
A standard quantum solution for Deutsch's problem is briefly presented, as all further analysis will stem from this. This is based on the formulation given in~\cite{Cleve:1997p569} which solves Deutsch's problem with probability one using only one call to the quantum black-box computing $f$. A traditional classical algorithm would require two calls to a classical black-box in order to determine if $f$ is constant or balanced. The quantum black-box extends the classical black-box to operate on superpositions of basis states. Since the classical black-box is not reversible it is embedded in the unitary quantum black-box described by the unitary operator $U_f$, an $f$-controlled-NOT ($f$-cNOT) gate, such that 
$$U_f \ket{x}\ket{y} = \ket{x}\ket{ y \oplus f(x)}.$$
In order to see how the quantum solution works, note that
\begin{align*}
	U_f \ket{x}\ket{-} &= \ket{x}\oort(\ket{0\oplus f(x)}-\ket{1\oplus f(x)}) = (-1)^{f(x)}\ket{x}\ket{-}.
\end{align*}
From this observation we can formulate the quantum solution. Taking the initial state $\ket{0}\ket{1}$ and operating on it with a 2-qubit Hadamard gate $H^{\otimes 2}$, we get:
$$H^{\otimes 2}\ket{0} \ket{1} = H\ket{0} H\ket{1} = \ket{+}\ket{-}.$$
Next, operating on the state with $U_f$:
\begin{align}
U_f \ket{+} \ket{-} &= \frac{1}{\sqrt{2}} \left( (-1)^{f(0)}\ket{0} + (-1)^{f(1)} \ket{1} \right)\ket{-}\notag\\
	&= \frac{(-1)^{f(0)}}{\sqrt{2}}\left( \ket{0} + (-1)^{f(0)\oplus f(1)}\ket{1} \right)\ket{-},\label{eqn:Uf-n1}
\end{align}
and applying $H^{\otimes 2}$ one more time we get
$$H^{\otimes 2}\frac{(-1)^{f(0)}}{\sqrt{2}}\left( \ket{0} + (-1)^{f(0)\oplus f(1)}\ket{1} \right)\ket{-} = (-1)^{f(0)}\ket{f(0) \oplus f(1)} \ket{1}.$$
Measuring the first qubit we obtain $0$ with probability one if $f$ is constant and $1$ with probability one if $f$ is balanced.

This quantum solution is correct with probability one using only one call to the quantum black-box represented by $U_f$. An important note is that this computation involves no entanglement, and the auxiliary qubit remains unchanged by $U_f$ leaving the two qubits separable.\footnote{In this paper we take the terms `separable', `product-state' and `unentangled' to have the same meaning since we deal only with pure states. As such, we do attempt to generalise the results to mixed-state computations; see~\cite{Jozsa:2003p1766} for a discussion of the difficulties in extending general de-quantisation results to mixed-states.} The power of this quantum solution seems to come only from parallelism and interference, not entanglement.

\subsection{Classical Solutions}

While the quantum algorithm makes use of quantum parallelism and interference, these qualities (unlike entanglement) are not inherently quantum mechanical; their physical presence in the computation may be of quantum mechanical origin, but the mathematical effect and ability to be present in classical systems is due to the two-dimensionality of qubits compared to the one-dimensionality of classical bits. Hence, the same computational advantage should be achievable in a classical two-dimensional computational system.

The first method, presented in~\cite{Calude:2007p542}, uses complex numbers as classical two-dimensional bits.
The set $\{1,i\}$ acts as a computational basis in the same way that $\{\ket{0},\ket{1}\}$ does for quantum calculations, and its elements can similarly be assigned the meanings of `0' and `1'. Since an arbitrary complex number may be written as $z=a+bi$ with $a,b\in\IR$, a complex number $z$ is a natural superposition of the basis elements in the same way that a qubit is. 

Just as in the quantum case, we need to embed the original classical black-box computing $f$ in one which operates on complex numbers. This black-box can be represented by an operator $C_f$, a direct analogue of $U_f$ (although the requirement of unitarity is no longer necessary). The effect of $C_f$ (cf. Equation~\ref{eqn:Uf-n1}) is
$$C_f(a+bi)=(-1)^{f(0)}\left( a+(-1)^{f(0) \oplus f(1)}bi \right).$$
If $f$ is constant, $C_f$ is the identity operation to within a factor of $-1$ ($C_f(x) = \pm x$). If $f$ is balanced, $C_f$ is the conjugation operation ($C_f(x) =\pm \overline{x}$). In order to project our complex numbers back on to the computational basis, we multiply the black-box output by the input. Note that this is not a physical operation but an abstract mathematical one; in the quantum case we are forced to work in a physical computational embedding, but classically this is not the case.

If $z=1+i$ (an equal superposition of basis states),
$$\frac{1}{2}z \times C_f(z) = 
\begin{cases}
	\frac{\pm 1}{2}z^2 =\pm i & \text{if $f$ is constant,}\\
	\frac{\pm 1}{2}z \overline{z} = \pm 1 & \text{if $f$ is balanced.}\\
\end{cases}$$ In this manner, if the output is imaginary then $f$ is constant, if it is real then $f$ is balanced. Importantly, this is a deterministic result, and in fact the sign of the output allows us to identify \emph{which} balanced or constant function $f$ is. This is something the quantum algorithm provably cannot do~\cite{Mermin:2007aa}.

This process of finding a classical solution with the same complexity as the quantum solution for a problem informally corresponds to what we call de-quantisation. Formally, the requirements for a classical algorithm to be a de-quantisation are different depending on the type of problem being solved. Let $\mathcal{A}$ be quantum algorithm with complexity measure $f(n)$ and output probability distribution $\mathcal{P}$. A probabilistic Turing machine $M$, such that for every computable $\gamma > 0$ there effectively exists a probability distribution $\mathcal{P'}$  with $|\mathcal{P} - \mathcal{P'}| < \gamma$, and $M$ sampling from $\mathcal{P'}$ has complexity measure $g(n,\gamma)$, is a potential de-quantisation of $\mathcal{A}$~\cite{Abbott:2010fk}. For $M$ to be a de-quantisation, $g$ must satisfy certain requirements depending on the type of problem. For example, in a standard (non black-box) algorithm, $f(n) = T(n)$, the size of the circuit for $\mathcal{A}$, and $M$ is a de-quantisation if $g(n,\gamma) = \text{poly}(T(n),\log{(1/\gamma)})$. In the Deutsch-Jozsa algorithm $f(n)=1$ is the number of black-box calls, and $M$ is a de-quantisation if $g(n,\gamma) = f(n)$, i.e. the classical solution must also use only one black-box call. For an overview and comparison of various de-quantisation techniques we refer the reader to~\cite{Abbott:2010fk}, but we focus on entanglement based techniques in this paper.

The de-quantisation we have presented is only one possible method, and places emphasis on mathematical correspondence with the quantum solution. A different solution is presented by Arvind in~\cite{Arvind:2001p561} which draws physical similarities that are more visible than in the above solution. Arvind uses the polarisation of a photon (treated classically) as the computational basis $\{x\text{-pol}$, $y\text{-pol}\}$, and any polarisation in the $xy$-plane is physically valid. It is noted that all transformations in the group SU(2) can be realised by two quarter-wave plates and a single half-wave plate orientated suitably, and Deutsch's problem is solved using such transformations. Written in matrix form the solution is mathematically identical to the quantum one. This corresponds to the following physical process: preparing a photon with $y$-polarisation, rotating its polarisation anti-clockwise in the $xy$-plane by $45^{\circ}$, applying the optical black-box and applying the anti-clockwise rotation once more before measuring the $y$-polarisation of the photon.
 
The correspondence here relies not on embedding classical bits in a different, classical two-dimensional basis but on directly implementing the transformations used in the quantum solution through classical means. In other words, the quantum algorithm does not take advantage of non-classical effects, so the same result can be obtained through purely classical optics (regardless of whether or not photons are actually quantum mechanical objects---they do not need to be treated as such).

\section{The Deutsch-Jozsa Problem}

This problem was extended by Deutsch and Jozsa~\cite{Deutsch:1992p560} to functions on $n$-bit strings. The standard formulation of the Deutsch-Jozsa problem is as follows: let $f : \{0,1\}^n \to \{0,1\}$, and suppose we are given a black-box computing $f$ with the guarantee that $f$ is either constant (i.e. $\forall x \in \{0,1\}^n: f(x) = a$, $a \in \{0,1\}$) or balanced (i.e. $f(x) = 0$ for exactly half of the possible inputs $x \in \{0,1\}^n$). Such a function $f$ is called \emph{valid}. The Deutsch-Jozsa problem is to determine if $f$ is constant or balanced in as few black-box calls as possible. An important note is that unlike in Deutsch's problem, where there are exactly two balanced and two constant functions $f$, the distribution of constant and balanced functions is asymmetrical in the Deutsch-Jozsa problem. In general, there are $N=2^n$ possible input strings, each with two possible outputs ($0$ or $1$). Hence, for any given $n$ there are $2^N$ possible functions $f$. Of these, exactly two are constant and $\binom{N}{N/2}$ are balanced. Evidently, the probability that our $f$ is constant tends towards zero very quickly (recall $f$ is guaranteed to be valid). Furthermore, the probability that any randomly chosen function of the $2^N$ possible functions is valid is:
$$\frac{\binom{N}{N/2}+2}{2^N},$$
which also tends to zero as $n$ increases. This is clearly not an ideal problem to work with, however this does not mean that we cannot gain useful information from studying it.

\section{Adaptation for $n=2$}
In this section we will provide a formulation of the solution for the Deutsch-Jozsa problem with $n=2$ which makes the separability of the states clearly evident and draws obvious parallelism with the $n=1$ solution presented in the previous section.

\subsection{Quantum Solution}

For $n=2$ the quantum black-box we are given takes as input three qubits and is represented by the following unitary operator $U_f$, just as it was for $n=1$:
$$U_f \ket{x} \ket{y} = \ket{x} \ket{y \oplus f(x)},$$
where $x \in \{0,1\}^2$. For $n=2$ there are sixteen possible Boolean functions. Two of these are constant, another six are balanced and the remaining eight are not valid. All these possible functions are listed in Table~\ref{table:bFuncs}.

\begin{table}[ht]
\begin{center}
\begin{tabular}{c|cc|cccccc|cccccccc}
$f(x)$ & \multicolumn{2}{c|}{Constant} & \multicolumn{6}{c|}{Balanced} & \multicolumn{8}{c}{Invalid}\\
\hline
$f(00) = $ & \phantom{.} 0 \phantom{.} & 1 & 0 & 1 & 0 & 1 & 0 & 1 & 1 & 0 & 1 & 0 & 1 & 0 & 0 & 1 \\
$f(01) = $ & 0 & 1 & 0 & 1 & 1 & 0 & 1 & 0 & 1 & 0 & 1 & 0 & 0 & 1 & 1 & 0 \\
$f(10) = $ & 0 & 1 & 1 & 0 & 1 & 0 & 0 & 1 & 1 & 0 & 0 & 1 & 1 & 0 & 1 & 0 \\
$f(11) = $ & 0 & 1 & 1 & 0 & 0 & 1 & 1 & 0 & 0 & 1 & 1 & 0 & 1 & 0 & 1 & 0 \\
\end{tabular}
\end{center}
\caption{All possible Boolean functions $f : \{0,1\}^2 \to \{0,1\}$.}
\label{table:bFuncs}
\end{table}
Evidently, half of these functions are simply the negation of another. If we let $f'(x) = f(x) \oplus 1$, we have:
\begin{align*}
U_{f'}\ket{x}\ket{-} &= (-1)^{f'(x)}\ket{x}\ket{-}
	 = - U_f \ket{x}\ket{-}.
\end{align*}
In this case the result obtains a global phase factor of $-1$; this has no physical significance so the outputs of $U_f$ and $U_{f'}$ are indistinguishable.

If we initially prepare our system in the state $\ket{00}\ket{1}$, operating on this state with $H^{\otimes 3}$ gives
\begin{equation}
\label{eqn:equalSuperpos}
H^{\otimes 3}\ket{00}\ket{1} = \frac{1}{2}\sum_{x\in \{0,1\}^2}\ket{x}\ket{-} = \ket{++}\ket{-}.
\end{equation}
After applying the $f$-cNOT gate $U_f$ we have
\begin{equation}
\label{eqn:n2fcNot}
\begin{split}
U_f \frac{1}{2}\sum_{x\in \{0,1\}^2}\ket{x}\ket{-} & = \sum_{x\in \{0,1\}^2}\frac{(-1)^{f(x)}}{2}\ket{x}\ket{-}.
\end{split}
\end{equation}
From the well known rule (see~\cite{Jorrand:2003p562}) about 2-qubit separable states, we know that this state is separable if and only if $(-1)^{f(00)}(-1)^{f(11)} = (-1)^{f(01)}(-1)^{f(10)}$. By noting that the mapping $(-1)^{f(a)}(-1)^{f(b)} \leftrightarrow f(a) \oplus f(b)$ is a bijection, we see that the separability condition reduces to $f(00) \oplus f(11) = f(01) \oplus f(10)$. From Table~\ref{table:bFuncs} it is clear this condition must hold for all balanced or constant functions $f$ for $n=2$, and thus no entanglement is present in the $n=2$ case.

We can now rewrite Equation~\ref{eqn:n2fcNot} in a separable form as follows:
\begin{equation}
\label{eqn:n2Uf}
U_f \ket{++} \ket{-} = \frac{(-1)^{f(00)}}{2} \left( \ket{0} + (-1)^{f(00) \oplus f(10)} \ket{1} \right) \left( \ket{0} + (-1)^{f(10) \oplus f(11)} \ket{1} \right) \ket{-}.
\end{equation}

%
By applying a final 3-qubit Hadamard gate to project this state onto the computational basis we obtain
\begin{multline*}
$$\frac{(-1)^{f(00)}}{2} H^{\otimes 3} \left( \ket{0} + (-1)^{f(00) \oplus f(10)} \ket{1} \right) \left( \ket{0} + (-1)^{f(10) \oplus f(11)} \ket{1} \right) \ket{-} \\= (-1)^{f(00)} \ket{f(00) \oplus f(10)}  \otimes \ket{f(10) \oplus f(11)} \ket{1}.$$
\end{multline*}
By measuring both the first and second qubits we can determine the nature of $f$: if both qubits are measured as $0$, then $f$ is constant, otherwise $f$ is balanced.

\subsection{Classical Solutions}

Because the quantum solution contains no entanglement, the problem can be de-quantised in a similar way to the $n=1$ case, but this time using two complex numbers as the input to the black-box. We extend the black-box to operate on two complex numbers, $C_f : \IC^2 \to \IC^2$, and define it by analogy to $U_f$ just as we did for the $n=1$ case. Let $z_1$, $z_2 $ be complex numbers,
\begin{equation}
	\label{deQuant:eqn:2bitCf}
	C_f\twoVec{z_1}{z_2} = C_f\twoVec{a_1+b_1i}{ a_2 + b_2i} =  \twoVec{(-1)^{f(00)}\left[a_1 + (-1)^{f(00) \oplus f(10)}b_1i\right]}{ a_2 + (-1)^{f(10) \oplus f(11)}b_2i}.
\end{equation}
\if02
It is important to note that, just as in the quantum case where the output of the black-box was two qubits that can be independently measured, the output of $C_f$ is two complex numbers  that can be independently manipulated, rather than the complex number resulting from their product. This is fairly intuitive because the ability to measure specific bits (of any kind) is fundamental to computation. Note, however, that in a quantum system measurement of an entangled qubit necessarily effects the state of the other, unmeasured, qubits it is entangled to.
\fi
If we let $z = z_1 =  z_2 = 1+i$, multiplying by $z$ to project onto the computational basis as for the $n=1$ case, we get:
\begin{displaymath}
	\frac{z}{2}\times C_f \twoVec{z}{z} = \frac{1}{2}\times
	\begin{cases}
		\twoVec{(-1)^{f(00)}z^2}{z^2} = \twoVec{(-1)^{f(00)}i}{i} & \text{if $f$ is constant,}\\
		&\\
		\twoVec{(-1)^{f(00)}z\bar{z}}{z^2} = \twoVec{(-1)^{f(00)}}{i} &\\
		\twoVec{(-1)^{f(00)}z\bar{z}}{z\bar{z}} = \twoVec{(-1)^{f(00)}}{1} & \text{if $f$ is balanced.}\\
		\twoVec{(-1)^{f(00)}z^2}{z\bar{z}} = \twoVec{(-1)^{f(00)}i}{1} &\\
	\end{cases}
\end{displaymath}

By checking both of the resulting complex numbers, we can determine whether $f$ is balanced or constant with certainty. If both complex numbers are imaginary then $f$ is constant, otherwise it is balanced. In fact, just as in the $n=1$ case, the ability to determine if the output numbers are negative or positive allows us to determine the value of $f(00)$ and thus which Boolean function $f$ is.

It is because of the separability of the quantum solution that it is possible to reproduce the ability to solve the problem with only one black-box call. If the quantum state was not separable there would be no embedding of the original black-box into one operating on two complex-bits which would allow us to determine the nature of $f$ with only one black-box call.

As with the $n=1$ case, an alternative classical approach can be presented using two photons. If a transformation on two qubits can be separated into a transformation on each qubit independently then the transformation is trivially implemented classically. As Arvind noted~\cite{Arvind:2001p561}, the 2-qubit transformations $U_f$ can be written in such a way since they are separable: from Equation~\ref{eqn:n2Uf} it is clear $U_f$ can be written as a product of the two 1-qubit gates\footnote{So far we have been considering the case where $U_f$ operates on $n$ input qubits and one auxiliary qubit, $\ket{-}$. It has been shown (see~\cite{Collins:1998p1153}) that the auxiliary qubit is not necessary if we restrict ourselves to the subspace spanned by $\ket{-}$. We have presented the algorithm with the auxiliary qubit present because it is more intuitive to think of the input-dependent phase factor being an eigenvalue of the auxiliary qubit which is kicked back. The de-quantised solutions, however, bear more resemblance to this reduced version of $U_f$ operating only on $n$ qubits.}
\begin{align*}
U_f^{(1)} \ket{+} = \frac{1}{\sqrt{2}}\left(\ket{0} + (-1)^{f(00)\oplus f(10)}\ket{1}\right),\\
U_f^{(2)} \ket{+} = \frac{1}{\sqrt{2}}\left(\ket{0} + (-1)^{f(10)\oplus f(11)}\ket{1}\right).
\end{align*}
Each of these are valid unitary operators, and the transformation describing the black-box may be written $U_f = U_f^{(1)}\otimes U_f^{(2)}$. This means that the operation of $f$ can be computed by applying a 1-qubit operation (implemented as wave-plates) to each photon independently, and thus a classical solution for $n=2$ is easily found. The photons need not interact with each other at any point during the algorithm, not even inside the black-box implementation.

This classical, optical method is equivalent to both the quantum solution and the previously described classical solution (although the latter is abstract and mathematical, while the former are physical). The difference is in how it is represented, bringing emphasis on the fact that for $n=2$ the quantum solution does not take advantage of uniquely quantum behaviour and is thus classical in nature. Further, it shows that the solution can be obtained without any interaction or sharing of information between qubits.

\section{Separability in the Deutsch-Jozsa problem}
\label{sec:deq:sep}

Before we consider de-quantisation of the Deutsch-Jozsa problem for $n \ge 3$, we will first examine the separability of the states used in the quantum solution more carefully, as determining if a state is separable is a key step in determining if an easy de-quantisation is possible. Conditions to determine if a $n$-qubit state is separable are presented in~\cite{Jorrand:2003p562}. We will review these results, before reformulating them in a recursive manner which will allow us to apply these results much more easily to the Deutsch-Jozsa problem.

In order to determine if an arbitrary $n$-qubit state is separable, we must first introduce the concept of pair product invariance~\cite{Jorrand:2003p562}. A state $\ket{\psi_n} = \sum_{i=0}^{N-1}\alpha_i \ket{i}$ with $N=2^n$ is \emph{pair product invariant} if and only if $\forall k \in \{1,\dots,n\}, \forall i \in \{0,\dots,K-1\} : \alpha_i \alpha_{K-i-1} = c_k$, where each $c_k$  is a constant and $K=2^k$.

Pair product invariance can also be reformulated recursively. We let $P_n$ be the set of all pairs $(i,k)$ such that for any two elements $(a,k),(b,k)\in P_n$ we have $\alpha_a\alpha_{K-a-1} = \alpha_b\alpha_{K-b-1} = c_k$ if and only if $\ket{\psi}$ is pair product invariant. We can recursively write this by breaking up the iteration over all $k \in \{1,\dots,n\}$. To do so we define $P_l$ as:
\begin{equation}
	\label{P}
	P_l = \{ (i,k)\mid k\in\{2,\dots,l\}, i\in\{0,\dots,2^{k-1}-1\} \}.
\end{equation}
Recursively, this becomes:
\begin{align*}
	P_{l} &= P_{l-1} \cup \{ (i,l)\mid i \in \{0,\dots,2^{l-1}-1\}  \}\text{, with the base case}\\
	P_2 &= \{ (0,2), (1,2) \}.
\end{align*}
\begin{theorem}
	\label{thm:PPIRec}
	An $n$-qubit state $\ket{\psi_n} = \sum_{i=0}^{N-1}\alpha_i \ket{i}$ is pair product invariant if and only if $P_n$, as defined by \eqref{P}, satisfies the following condition: $\forall (a,k),(b,k) \in P_n : \alpha_a\alpha_{K-a-1} = \alpha_b\alpha_{K-b-1} = c_k$ where $K=2^k$ and $c_k$ is constant.
\end{theorem}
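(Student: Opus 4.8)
\section*{Proof proposal}

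The plan is to show that the recursively built index set $P_n$ lists, without repetition, exactly those pairs of coefficients that the direct definition of pair product invariance constrains; the theorem then follows from a short chain of equivalences, carried out in three steps. The first step removes the redundancy in the direct definition: fix $k$, set $K=2^k$, and note that $i \mapsto K-i-1$ is a fixed-point-free involution on $\{0,\dots,K-1\}$ (since $K$ is even), so it partitions $\{0,\dots,K-1\}$ into $2^{k-1}$ two-element blocks $\{i,K-i-1\}$ with $i \in \{0,\dots,2^{k-1}-1\}$, and the quantifier ``$\forall i \in \{0,\dots,K-1\}$'' imposes each constraint $\alpha_i\alpha_{K-i-1}=c_k$ twice. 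Restricting the quantifier to $i \in \{0,\dots,2^{k-1}-1\}$ therefore loses nothing: $\ket{\psi_n}$ is pair product invariant iff for every $k \in \{1,\dots,n\}$ there is a constant $c_k$ with $\alpha_i\alpha_{K-i-1}=c_k$ for all $i \in \{0,\dots,2^{k-1}-1\}$.

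The second step observes that the $k=1$ layer is vacuous: there $K=2$ and $i$ takes only the value $0$, so the requirement is merely that $\alpha_0\alpha_1$ equal some constant $c_1$, which always holds. Hence the $k=1$ layer may be dropped, and pair product invariance is equivalent to the same condition quantified over $k \in \{2,\dots,n\}$ alone.

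The third step matches this with the $P_n$ condition. A one-line induction on $l$, using the base case $P_2 = \{(0,2),(1,2)\}$ and the recursion $P_l = P_{l-1} \cup \{(i,l)\mid i\in\{0,\dots,2^{l-1}-1\}\}$, confirms that $P_n$ equals the set $\{(i,k)\mid k\in\{2,\dots,n\},\, i\in\{0,\dots,2^{k-1}-1\}\}$ of \eqref{P}: the base case is immediate and the inductive step merely appends the $k=l$ layer. Grouping the elements of $P_n$ by their second coordinate $k$ then turns the stated requirement ``$\forall (a,k),(b,k)\in P_n : \alpha_a\alpha_{K-a-1}=\alpha_b\alpha_{K-b-1}=c_k$'' into precisely the layer-by-layer condition of the second step, closing the chain of equivalences. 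The whole argument is bookkeeping, so I expect no real obstacle; the one point needing care is reading ``$c_k$ constant'' as constant in $i$ for each fixed $k$ (while free to vary with $k$), together with checking that $\{\,\{i,K-i-1\} : i < 2^{k-1}\,\}$ genuinely partitions $\{0,\dots,K-1\}$, which is what makes the restriction of the $i$-quantifier harmless.
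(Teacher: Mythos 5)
Your proposal is correct and follows essentially the same route as the paper's own proof: eliminate the double counting by halving the range of $i$ (the involution $i \mapsto K-i-1$), discard the vacuous $k=1$ layer, and identify $P_n$ (via its recursion) with the explicit set in \eqref{P} so that the quantified condition matches the definition layer by layer. You merely spell out the bookkeeping (fixed-point-freeness, the partition into pairs, the induction on $l$) a bit more explicitly than the paper does.
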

\begin{proof}
Firstly, note that the base case, $P_2$, simply amounts to the well-known condition that $\alpha_{00}\alpha_{11}=\alpha_{01}\alpha_{10}$. We can modify the definition of pair product invariance by changing $i$ to range from $0$ to $2^{k-1} - 1$ instead of $K-1$ since $\alpha_i \alpha_{K-i-1} = \alpha_{K-i-1} \alpha_i$, thus avoiding double counting. The $k=1$ case can also be removed since it now reduces to a single term, and is hence unnecessary. The definition of $P_l$ in \eqref{P} ensures that the quantifier $\forall (a,k),(b,k) \in P_n$ runs for $k\in \{2,\dots,n\}$, $a,b\in \{0,\dots,2^{k-1}-1\}$. It is then clear to see that the required condition is satisfied by $P_n$ if and only if $\ket{\psi}$ is pair product invariant.\qed
\end{proof}

The main theorem of relevance to us regarding pair product invariance is Theorem~\ref{deq:thm:PPI} below. 
\begin{theorem}[Theorem 1 in~\cite{Jorrand:2003p562}]
	\label{deq:thm:PPI}
 	If $\forall i \in \{0,\dots,N-1\} : \alpha_i \neq 0$, a state $\ket{\psi_n}=\sum_{i=0}^{N-1}\alpha_i\ket{i}$ is fully separable if and only if it is pair product invariant.
\end{theorem}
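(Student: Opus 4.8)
The plan is to prove the two implications separately: the ``only if'' direction is a direct calculation from the product form, while the ``if'' direction is an induction on $n$ in which one qubit is split off at each step.

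For the forward direction, suppose $\ket{\psi_n}$ is fully separable, say $\ket{\psi_n} = \bigotimes_{j=1}^{n}\bigl(\beta_j\ket{0}+\gamma_j\ket{1}\bigr)$, and label qubit $1$ as most significant, so that writing $i$ in binary as $b_1b_2\cdots b_n$ one has $\alpha_i = \prod_{j=1}^{n}u_j(b_j)$ with $u_j(0)=\beta_j$, $u_j(1)=\gamma_j$. Fix $k\in\{1,\dots,n\}$ and $i\in\{0,\dots,2^{k}-1\}$. The combinatorial fact driving the computation is that the index $2^{k}-i-1$ has the same top $n-k$ bits as $i$ (all zero) and the bitwise complement of the bottom $k$ bits of $i$; hence $\alpha_i\,\alpha_{2^{k}-i-1} = \prod_{j=1}^{n-k}\beta_j^{2}\cdot\prod_{j=n-k+1}^{n}\beta_j\gamma_j$, which does not depend on $i$. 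Taking this common value as $c_k$ shows $\ket{\psi_n}$ is pair product invariant; the nonvanishing hypothesis plays no role here.

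For the reverse direction I would induct on $n$, the case $n=1$ being immediate since any one-qubit state is already a product state. For the inductive step, split the state on its first qubit, $\ket{\psi_n} = \ket{0}\ket{\phi_0}+\ket{1}\ket{\phi_1}$, with $\ket{\phi_0}=\sum_{i=0}^{2^{n-1}-1}\alpha_i\ket{i}$ and $\ket{\phi_1}=\sum_{i=0}^{2^{n-1}-1}\alpha_{2^{n-1}+i}\ket{i}$. Two facts then finish the argument. First, the pair product conditions of $\ket{\psi_n}$ for $k\in\{1,\dots,n-1\}$ constrain only the amplitudes $\alpha_0,\dots,\alpha_{2^{n-1}-1}$, and they say exactly that $\ket{\phi_0}$ is an $(n-1)$-qubit pair product invariant state with all amplitudes nonzero; so by the inductive hypothesis $\ket{\phi_0}$ is fully separable. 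Second, $\ket{\phi_1}$ is a scalar multiple of $\ket{\phi_0}$: the $k=n$ condition $\alpha_i\alpha_{N-i-1}=c_n$, read for $i\le 2^{n-1}-1$ and rewritten with $j=2^{n-1}-i-1$, gives $\alpha_{2^{n-1}+j}\,\alpha_{2^{n-1}-j-1}=c_n$; substituting the $k=n-1$ relation $\alpha_{2^{n-1}-j-1}=c_{n-1}/\alpha_j$ — legitimate precisely because every amplitude is nonzero, which also forces $c_{n-1}\neq 0$ — yields $\alpha_{2^{n-1}+j}=(c_n/c_{n-1})\,\alpha_j$ for all $j$, i.e. $\ket{\phi_1}=(c_n/c_{n-1})\ket{\phi_0}$. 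Consequently $\ket{\psi_n} = \bigl(\ket{0}+(c_n/c_{n-1})\ket{1}\bigr)\otimes\ket{\phi_0}$, which is fully separable.

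I expect the second fact in the inductive step — proportionality of $\ket{\phi_1}$ and $\ket{\phi_0}$ — to be the crux. By itself the $k=n$ condition only ties $\ket{\phi_1}$ to a ``reversed, reciprocal'' copy of $\ket{\phi_0}$; it is the lower-order pair product invariance of $\ket{\phi_0}$ (which makes $\alpha_j\,\alpha_{2^{n-1}-j-1}$ independent of $j$) that upgrades this to honest proportionality, and this is the single point where the hypothesis $\alpha_i\neq 0$ for all $i$ is genuinely used. Everything else — the binary-complement identities, keeping straight which qubit is most significant, and checking that the restricted conditions on $\ket{\phi_0}$ reproduce the definition of pair product invariance for $n-1$ qubits — is routine bookkeeping.
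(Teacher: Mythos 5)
Your proof is correct. One point of comparison is worth noting up front: the paper does not prove this statement at all --- it is imported verbatim as Theorem~1 of Jorrand and Mhalla~\cite{Jorrand:2003p562} --- so there is no in-paper argument to measure yours against, and what you have written is a genuine self-contained proof rather than a variant of the paper's. Both halves check out: the forward direction is indeed just the binary-complement computation (and, as you say, needs no nonvanishing hypothesis), and in the converse your key step --- combining the $k=n$ condition $\alpha_{2^{n-1}+j}\,\alpha_{2^{n-1}-j-1}=c_n$ with the $k=n-1$ condition $\alpha_j\,\alpha_{2^{n-1}-j-1}=c_{n-1}$ to get $\ket{\phi_1}=(c_n/c_{n-1})\ket{\phi_0}$ --- is exactly where the hypothesis $\alpha_i\neq 0$ (hence $c_{n-1}\neq 0$) is needed; the hypothesis is not dispensable, since e.g.\ $(\ket{01}+\ket{10})\otimes\ket{0}$ has all pair products equal to zero yet is not fully separable. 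Your restriction argument (that the $k\le n-1$ conditions involve only the first $2^{n-1}$ amplitudes and are precisely the $(n-1)$-qubit pair product invariance of $\ket{\phi_0}$) also matches the paper's own recursive reformulation in its Theorem~\ref{thm:PPIRec}, and the proportionality of the two halves that you derive is the same structural fact the paper later exploits (without proof of this theorem) in its counting argument $a_{n+1}=2a_n$ for separable outputs; so your induction additionally illuminates that part of the paper.
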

Hence, in order to determine if a state $\ket{\psi_n}$ is separable we can simply apply Theorems~\ref{thm:PPIRec} and~\ref{deq:thm:PPI}. The recursive formulation of pair product invariance will prove more useful to develop general results on separability.

Note that any constant function produces a pair product invariant state, as all $\alpha_i$ are equal and the conditions are trivially satisfied. The output of the quantum black-box for constant $f$ can be separated as:
\begin{align*}
	\frac{1}{2^{n/2}}U_f \sum_{x \in \{0,1\}^n}\ket{x}\ket{-} & = \frac{1}{2^{n/2}}\sum_{x \in \{0,1\}^n}(-1)^{f(x)}\ket{x}\ket{-}\\
	&= \frac{(-1)^{f(00)}}{2^{n/2}}\sum_{x \in \{0,1\}^n}\ket{x}\ket{-}\\
	&= (-1)^{f(00)} \ket{+}^{\otimes n} \ket{-}.
\end{align*}

\subsection{The case of $n \ge 3$}

For $n=3$ we are able to find balanced functions which do not produce pair product invariant states, and thus, by Theorem~\ref{deq:thm:PPI}, these states are entangled.

If we choose $f$ such that $$\left( f(0), f(1), f(2), f(3), f(4), f(5), f(6), f(7) \right) = (0,0,0,1,1,1,1,0),$$ then this $f$ is obviously balanced. For this choice of $f$, $f(000) \oplus f(011) \neq f(001) \oplus f(010)$. This implies that $\alpha_{000}\alpha_{011} \neq \alpha_{001}\alpha_{010}$ and hence the output of $U_f$ in the standard quantum algorithm is not pair product invariant and is thus entangled.

The recursive formulation of pair product invariance allows us to determine exactly how many separable states exist for any given $n$. Theorem~\ref{thm:PPIRec}, along with the fact that $P_n \subset P_{n+1}$, means that for a function $f_{n+1}$ to produce a separable state, the first $n$ qubits of the resulting state must be pair product invariant. Hence we see that all functions $f_{n+1}$ which produce separable states are based on another such function $f_n$, and hence satisfy $f_{n+1}(0\circ x_n) = f_n(x_n), \forall x_n\in \{0,1\}^n$, where $0\circ x_n$ denotes the concatenation of $x_n$ onto $0$. This determines the action of $f_{n+1}$ on half of the possible input strings, and we must determine the possible actions on the other half. Figure~\ref{fig:deq:ppi} shows this nature of pair product invariance.
\begin{figure}[ht]
\begin{centering}
	\includegraphics[scale=0.85]{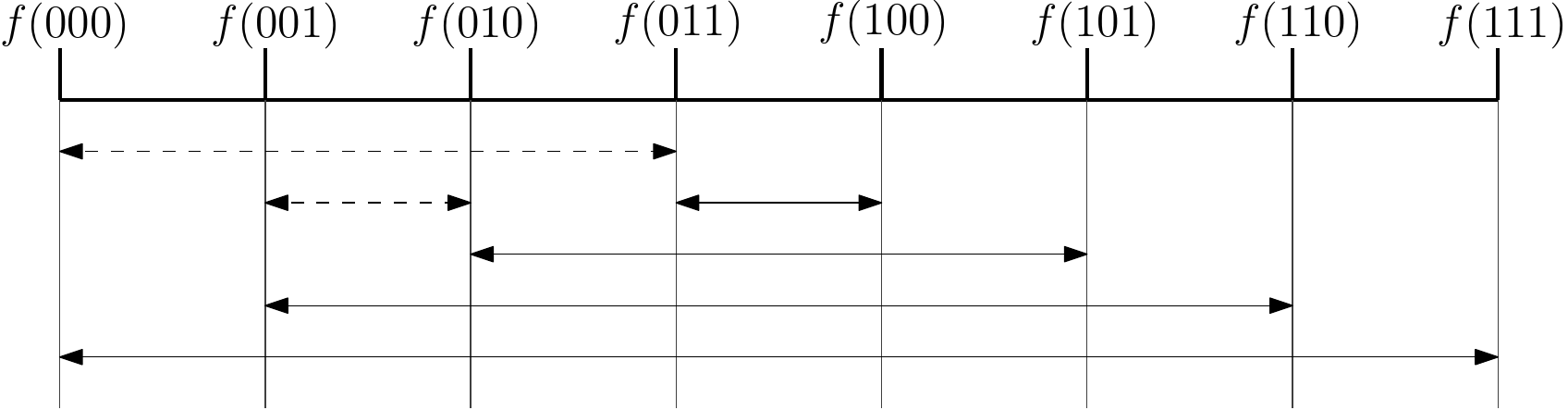}
	\caption[An illustration of the recursive nature of pair product invariance.]{A diagram showing the recursive nature of pair product invariance. All pairs of states linked by the same type of arrow must have the same parity under addition modulo $2$.}
	\label{fig:deq:ppi}
\end{centering}
\end{figure}

The action of $f_{n+1}$ on the remaining $N=2^n$ input strings is determined by Theorem~\ref{thm:PPIRec}. This requires that $\alpha_{N-1}\alpha_{N} = \alpha_{N-2}\alpha_{N+1} = \cdots = \alpha_{0}\alpha_{2N-1}$. Since $\alpha_{N-1}$ is already determined, $\alpha_N$ can take on two possible values. However, once this value is chosen, all $\alpha_i$ for $i>N$ are uniquely determined by the Theorem~\ref{thm:PPIRec}. If we let $a_n$ be the number of functions $f_n$ which produce separable states, then $f_{n+1}$ can have $a_n$ possible configurations for acting on $f_{n+1}(0\circ x_n)$, and for each of these configurations, there are two configurations for the remaining $N$ input strings, hence $a_{n+1} = 2a_n$.
Using the fact that $a_1 = 4$ we get an explicit result for the number of Boolean functions $f_n$ such that $U_{f_n}\ket{x}\ket{-}$ is separable:
$$a_n = 2^{n+1}.$$

We see that the number of separable states increases exponentially with the number of qubits being used. We also know that the number of possible functions $f_n$ which are either balanced or constant is
$$b_n = \binom{N}{N/2} + 2 = \binom{2^n}{2^{n-1}} + 2.$$
The fraction of possible Boolean functions which can be separated is 
$$\frac{a_n}{b_n}=2^{n+1}/\left(\binom{2^n}{2^{n-1}}+2\right).$$
This tends towards zero extremely quickly even for small $n$. Further, we can verify that for all $n\ge 3$ there exist valid functions $f$ such that the output of the black-box is \emph{maximally} entangled, i.e. all $n$ qubits are entangled together.
\begin{proposition}
	For all $n\ge 3$ there exists a valid $f : \{0,1\}^n \to \{0,1\}$ such that the first $n$ qubits of $U_f \ket{+}^{\otimes n}\ket{-} = \ket{\psi}\ket{-}$ are all entangled.
\end{proposition}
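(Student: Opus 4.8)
The plan is to exhibit one explicit valid $f$ for each $n\ge 3$ rather than argue abstractly. First I would pin down the notion: the $n$-qubit state $\ket{\psi}$ is ``maximally entangled'' in the sense meant here --- all $n$ qubits entangled together --- exactly when it cannot be written as $\ket{\psi_A}\otimes\ket{\psi_B}$ for any splitting of the first $n$ qubits into non-empty sets $A,B$, i.e.\ when its Schmidt rank across every such bipartition is at least $2$. (This is strictly stronger than merely failing pair product invariance, which only rules out full separability into single-qubit factors, so Theorems~\ref{thm:PPIRec}--\ref{deq:thm:PPI} are not by themselves enough here.) I also recall, as derived above, that $U_f\ket{+}^{\otimes n}\ket{-}=\ket{\psi}\ket{-}$ with $\ket{\psi}=2^{-n/2}\sum_{x}(-1)^{f(x)}\ket{x}$, so only this vector is at issue and the auxiliary $\ket{-}$ can be ignored.

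The candidate is the ``star-graph'' phase function corrected by one linear term: $f(x_1,\dots,x_n)=x_1x_2\oplus x_1x_3\oplus\cdots\oplus x_1x_n\oplus x_2$, equivalently $f=q\oplus x_2$ with $q(x)=x_1\,(x_2\oplus x_3\oplus\cdots\oplus x_n)$. I would first handle the uncorrected state: summing $2^{-n/2}\sum_x(-1)^{q(x)}\ket{x}$ separately over $x_1=0$ and $x_1=1$ collapses it to $\oort\bigl(\ket{0}\ket{+}^{\otimes(n-1)}+\ket{1}\ket{-}^{\otimes(n-1)}\bigr)$, which is in fact local-unitarily equivalent (via $H$ on qubits $2,\dots,n$) to the $n$-qubit GHZ state. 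For any bipartition $(A,B)$, after relabelling so that qubit~$1\in A$, write this as $\oort(\ket{\mu_0}_A\ket{\nu_0}_B+\ket{\mu_1}_A\ket{\nu_1}_B)$ with $\ket{\mu_b}=\ket{b}\otimes\ket{\pm}^{\otimes(A\setminus\{1\})}$ and $\ket{\nu_b}=\ket{\pm}^{\otimes B}$; because $\langle 0|1\rangle=0$ forces $\langle\mu_0|\mu_1\rangle=0$, and $\langle +|-\rangle=0$ together with $|B|\ge 1$ forces $\langle\nu_0|\nu_1\rangle=0$, this is a bona fide Schmidt decomposition of rank $2$, so the state is not a product across $(A,B)$. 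As this holds for every bipartition, the state is maximally entangled. Finally $(-1)^{f(x)}=(-1)^{q(x)}(-1)^{x_2}$, so $\ket{\psi_f}$ is obtained from the $q$-state by the local phase flip $\ket{1}\mapsto-\ket{1}$ on qubit~$2$; a local unitary leaves Schmidt ranks unchanged, so $\ket{\psi_f}$ is maximally entangled too.

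The genuinely delicate step --- and the one I expect to be the obstacle --- is validity: the obvious entangling phase functions (the form $q$ itself, or a path-graph form $\bigoplus_i x_ix_{i+1}$) are never balanced, since $\sum_x(-1)^{q(x)}\ne 0$, so a linear correction is unavoidable, and a balancing correction exists only once $n\ge 3$. I would check balancedness of $f=q\oplus x_2$ by a direct Walsh computation: $\sum_x(-1)^{q(x)+c\cdot x}$, summed first over $x_1$, factors into one-variable sums and hence vanishes unless $(c_2,\dots,c_n)$ is the all-zero or the all-one string; for $n\ge3$ the choice $c=(0,1,0,\dots,0)$ is neither, so $f$ is balanced. (For $n=2$ both patterns are forced, which is exactly why no balanced $f$ can entangle when $n=2$, in agreement with the earlier analysis, and explains the hypothesis $n\ge 3$.) A direct check at $n=3$ confirms this: the bit-string $(f(000),\dots,f(111))=(0,0,1,1,0,1,0,1)$ has exactly four ones. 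This completes the argument for every $n\ge 3$.
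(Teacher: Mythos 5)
Your construction is correct, but it follows a genuinely different route from the paper's proof. The paper argues by contraposition on the sign pattern of the state vector: since every amplitude is $\pm 2^{-n/2}$, a qubit that factored out would have to be $\ket{+}$ or $\ket{-}$, forcing the $2^n$-component sign vector into a block-repetition form $(\mathbf{x}_1,\pm\mathbf{x}_1,\dots,\mathbf{x}_{2^{i-1}},\pm\mathbf{x}_{2^{i-1}})$ with a common sign; it then exhibits the balanced function taking the value $0$ exactly on $\{0,1,\dots,2^{n-1}-2\}\cup\{2^n-2\}$, whose sign vector $(1,\dots,1,-1,-1,\dots,-1,1,-1)$ visibly violates that substructure for every choice of the would-be separable qubit. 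You instead take the quadratic phase $f=x_1(x_2\oplus\cdots\oplus x_n)\oplus x_2$, identify the resulting state as local-unitarily equivalent to the $n$-qubit GHZ state, and rule out \emph{every} bipartition by writing down an explicit rank-$2$ Schmidt decomposition, with balancedness verified by a Walsh-sum computation that also explains why the hypothesis $n\ge 3$ is needed. Your argument is somewhat longer but proves a formally stronger reading of the proposition: you establish genuine multipartite entanglement (no bipartition of the first $n$ qubits is a product), whereas the paper's substructure check only excludes single-qubit factors, which for $n\ge 4$ does not by itself exclude, say, a product of two entangled blocks; your remark that failing pair product invariance is insufficient is likewise well taken. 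What the paper's approach buys is brevity and continuity with its state-vector framework; what yours buys is an explicit GHZ-type family, the stronger conclusion, and a transparent reason for the $n\ge 3$ threshold.
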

\begin{proof}
	Firstly we note that if any qubit is separable from the rest, a specific sub-structure is present in the state-vector. Without loss of generality assume the $i$th qubit, $\ket{\psi_i}$, is separable from the rest, and let $\ket{\psi'}=\ket{\psi_1}\cdots \ket{\psi_{i-1}}\ket{\psi_{i+1}}\cdots \ket{\psi_n}$ be the state of the other $n-1$ qubits. Since all the amplitudes in $\ket{\psi}$ are $\pm 2^{-n/2}$, we know $\ket{\psi_i} = \ket{\pm} = (1,\pm 1)$---we use the state-vector notation with normalisation implicit. Let us write the state vector of $\ket{\psi'}$ in blocks of size $2^{n-i}$ as $\ket{\psi'}=(\mathbf{x}_1,\dots,\mathbf{x}_{2^{i-1}})$. Because of the symmetry in the state vector due to the tensor-product we have $\ket{\psi}=(\mathbf{x}_1,\pm \mathbf{x}_1,\dots,\mathbf{x}_{2^{i-1}},\pm\mathbf{x}_{2^{i-1}})$ where all the signs $\pm 1$ are identical. The function defined by 
	$$f(a) = 
	\begin{cases}
		0 & \text{for $a=0,1,\dots,2^{n-1}-2,2^{n}-2$,}\\
		1 & \text{elsewhere,}
	\end{cases}$$
	 is clearly valid, but the corresponding state $\ket{\psi}$ for this $f$ is 
	$$\ket{\psi} = (\underbrace{1,\dots,1,1}_{2^{n-1}-1 \text{ times}},-1,\underbrace{-1,\dots,-1}_{2^{n-1}-2 \text{ times}},1,-1),$$
which can easily be verified to not have the required sub-structure, and is thus maximally entangled.\qed
\end{proof}

The result of these observations is that even if we are promised $f$ is valid, we can no longer be sure the output of the black-box is separable, and for $n \ge 3$ the probability that it is separable tends to zero very quickly. Since any de-quantisation would have to handle maximally entangled outputs of the black-box, this method of de-quantisation used for $n=1,2$ will not scale easily to higher $n$, and in general yields very little information about the nature of a function. Looking at this from the view of computation with classical photons, there is no classical physical equivalent of entangled photons, as this is a purely quantum mechanical effect. However, in general it is very hard to show that no de-quantisation exists for a quantum algorithm which does not introduce exponential increase in space or time. In most cases, as earlier mentioned, a trivial method of de-quantisation is possible, but to show no better de-quantisation exists is very hard. As an example, Jozsa and Linden showed~\cite{Jozsa:2003p1766} that using the stabiliser description of quantum computation~\cite{Gottesman:1999aa} it is possible to find an efficient classical simulation of a quantum algorithm containing unbounded entanglement.

\section{General de-quantisation}

While de-quantisation appears to be very hard in the Deutsch-Jozsa problem for $n\ge 3$, we can investigate the general ability to de-quantise black-box algorithms. The main task in trying to de-quantise such an algorithm is to de-quantise the black-box. Indeed, efficiently simulating a black-box algorithm requires finding a classical black-box embedding which doesn't use an exponentially increasing amount of time or space. Using the methods looked at in this paper, this step would initially require showing that both the input and output of the black-box are separable. 

Firstly, we will summarise some important results due to Jozsa and Linden~\cite{Jozsa:2003p1766}, who examined the ability to efficiently simulate standard quantum algorithms by simulating the matrix mechanical formulation of these algorithms. We call a qubit register $\ket{\psi}$ $p$-blocked if at every step of the computation no subset of more than $p$ qubits are entangled. Their main result is Theorem~\ref{thm:sim}.
\begin{theorem}
	\label{thm:sim}
	Let $\mathcal{A}$ be a quantum algorithm with the properties that only the final gate is a measurement gate, and after every unitary gate application the state $\ket{\psi}$ is $p$-blocked, with $p$ independent of the input size $n$. Then the algorithm $\mathcal{A}$ can be de-quantised using the matrix formulation.
\end{theorem}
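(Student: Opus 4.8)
\section*{Proof proposal}

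The plan is to exploit the $p$-blocked hypothesis to keep, at every step of the simulation, a compact classical description of the register as a tensor product of small factors, and then to show that each gate and the final measurement can be processed in time $\text{poly}(T(n),\log(1/\gamma))$, which is exactly what de-quantisation of a standard algorithm requires. First I would fix the data structure: a $p$-blocked state $\ket{\psi}$ on $m=O(T(n))$ qubits can be written as $\ket{\psi}=\ket{\phi_1}\otimes\cdots\otimes\ket{\phi_r}$, where each factor $\ket{\phi_j}$ lives on a block $B_j$ of at most $p$ qubits. I would store the partition $\{B_1,\dots,B_r\}$ together with the $2^{|B_j|}\le 2^p$ amplitudes of each $\ket{\phi_j}$ to finite precision; since $2^p$ is a constant and $r\le m$, the description has size $\text{poly}(T(n))$ up to the number of precision bits.

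For the per-gate update, a unitary $G_i$ drawn from the fixed gate set touches a constant number $k$ of qubits, lying in blocks $B_{j_1},\dots,B_{j_s}$ with $s\le k$. I would merge these into one block $B$ of size at most $kp$ (a constant) by forming the tensor product of the corresponding $\ket{\phi_{j_t}}$ --- a vector of constant dimension $\le 2^{kp}$ --- and then apply $G_i$ to it, padded by the identity on the remaining qubits of $B$; this costs $O(1)$ arithmetic operations. By hypothesis the new global state is again $p$-blocked, so $B$ itself must split into tensor factors of at most $p$ qubits; I would recover this finest factorization by exhausting the constantly many partitions of $B$ and testing separability of each candidate (e.g.\ via the pair-product invariance criterion of Theorems~\ref{thm:PPIRec} and~\ref{deq:thm:PPI} restricted to $B$, or by rank-one checks on reshaped amplitude arrays), then replace $B_{j_1},\dots,B_{j_s}$ in the partition by the factors found. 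The invariant is thereby restored at cost $O(1)$, so the entire unitary part of the algorithm costs $O(T(n))$ elementary steps.

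For the final measurement gate, the measured qubits are spread across the current blocks; since distinct blocks are independent pure states of constant size, I would compute the exact marginal distribution of the measured qubits within each block, take the product over blocks, and have the probabilistic Turing machine sample from it --- in exact arithmetic this reproduces the quantum output distribution $\mathcal{P}$ precisely. To handle the fact that the fixed gate amplitudes (and hence the $\ket{\phi_j}$) need not be rational, I would run the simulation with every number truncated to $N$ bits; a routine error-propagation estimate bounds the accumulated deviation after $T(n)$ gates, the re-factorizations and the measurement by $2^{-N}\,\text{poly}(T(n))$, so taking $N=\text{poly}(T(n),\log(1/\gamma))$ gives a sampled distribution $\mathcal{P}'$ with $|\mathcal{P}-\mathcal{P}'|<\gamma$ and overall running time $\text{poly}(T(n),\log(1/\gamma))$.

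The step I expect to be the main obstacle is the re-factorization: one must argue rigorously that the $p$-blocked guarantee on the post-gate global state forces the constant-size merged block to decompose into pieces of size $\le p$ --- so that the stored factors never grow beyond $2^p$ amplitudes --- and that this decomposition can be detected reliably from finite-precision data (either by showing the separability decision is numerically robust, or by maintaining the partition structure symbolically while only the amplitudes are approximate). A related subtlety is controlling the error introduced by the implicit renormalizations when factors are extracted, and verifying that it does not compound worse than polynomially over the $T(n)$ steps.
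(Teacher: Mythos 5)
Your proposal is correct and follows essentially the same route as the paper's (sketched) argument, which is itself a summary of Jozsa and Linden: use the $p$-blocked property to keep the state as a product of constant-size factors, apply each gate within a merged block of constantly many qubits (the paper's $2p$ bound, your $kp$), and conclude a per-gate cost exponential in $p$ but not in $n$, giving only linear overhead. The extra details you supply---re-factorisation via separability tests and the finite-precision error analysis needed to meet the formal $\gamma$-approximation requirement of de-quantisation---are points the paper delegates to the cited reference rather than a different approach.
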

The proof relies on breaking up the $2^n \times 2^n$ matrices describing the gates with which the circuits in $\mathcal{A}$ are composed of into matrices operating on no more than $2p$ qubits at once.\footnote{The $2p$ comes about rather than $p$ in the case where a 2-qubit gate may operate on qubits from two separate entangled blocks. This approach assumes the circuit is decomposed into 1-qubit and 2-qubit gates.} Each matrix corresponding to a gate is replaced by a single matrix no larger than $2^{2p} \times 2^{2p}$. Since $p$ is constant, the cost of directly simulating the quantum algorithm by matrix multiplication can be reduced to a linear overhead, exponential in $p$ rather than $n$. 

This fixed-parameter tractability approach is mathematically equivalent to our approach, although our approach has the advantage that it retains more similarity to the quantum algorithms. Our method can also be easier to apply and produces simpler results than blind simulation of the matrix mechanics, and also more easily extends to the setting of black-box computation. The approach used by Jozsa and Linden does not directly apply to black-box computation where the nature of the black-box is unknown. It could readily be adapted by having the black-box perform a set of matrix multiplications, but such an approach is less natural. We will explore applying our de-quantisation method to arbitrary separable black-box computations since it is easier to examine and work with. Note that it is also not surprising that de-quantisation does not easily extend in the Deutsch-Jozsa problem since we have shown the entanglement grows exponentially.

The simplest case to tackle for de-quantising an arbitrary black-box algorithm is the case that both the input and output of the black-box $U_f$ can be expressed in a separable form. In this case we can show how a simple de-quantisation in the spirit presented previously can easily be obtained. A nice feature which makes de-quantisation of a black-box algorithm simple is that we do not have to worry about considering the separability of the decomposition of the gate, as it is supplied as an arbitrarily complex unitary gate, not necessarily decomposed into gates from a universal basis (indeed, we have no way of knowing how the black-box is devised). If we knew that the input and output of $U_f$ were separable, but had to decompose it into smaller unitary gates, we could not guarantee separability throughout the decomposed circuit representing $U_f$. However, the unitarity of $U_f$, regardless of its dimension, will allow a simple de-quantisation.
\begin{theorem}
	Let $U_f$ be the unitary operator representing the black-box such that $U_f$ never entangles its input in the quantum algorithm $\mathcal{A}$, i.e. both the input and output of $U_f$ are separable. Then $\mathcal{A}$ can be de-quantised into a classical algorithm with the same number of black-box calls.
\end{theorem}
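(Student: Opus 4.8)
The plan is to mimic in full generality the $n=1$ and $n=2$ constructions and the separability analysis of Section~\ref{sec:deq:sep}: represent each qubit of $\mathcal{A}$ by a classical two-dimensional unit (a complex bit $a+bi$, or, when complex amplitudes occur, the amplitude pair itself), replace the quantum black-box $U_f$ by the classical operator $C_f$ obtained by writing the action of $U_f$ on product states in these coordinates, and simulate every other gate of $\mathcal{A}$ by the trivial matrix-mechanical method together with the Born rule. Each application of $U_f$ in $\mathcal{A}$ is then replaced by exactly one application of $C_f$, so the classical algorithm makes the same number of black-box calls as $\mathcal{A}$ --- which is precisely what the de-quantisation criterion for a black-box problem requires --- and the only error is from finite-precision arithmetic, which can be driven below any prescribed computable $\gamma>0$ since matrix multiplication and the Born rule are computable.

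In more detail I would walk through the circuit $G=G_{T(n)}\cdots G_1$ maintaining a representation of the register state. A non-black-box unitary or a measurement gate is applied directly (adaptivity on measurement outcomes is handled by following the branch the quantum algorithm would follow; if in addition $\mathcal{A}$ keeps the state $p$-blocked after every gate, Theorem~\ref{thm:sim} makes this part efficient too, but efficiency here is irrelevant to the black-box-call count). When the gate is a call to $U_f$, the hypothesis guarantees that the state $\ket{\psi}$ reached at that point --- which may depend on $f$ through the earlier $C_f$ calls, harmlessly --- is a product state $\ket{\psi_1}\otimes\cdots\otimes\ket{\psi_m}$ and that $U_f\ket{\psi}$ is again a product state $\ket{\phi_1}\otimes\cdots\otimes\ket{\phi_m}$; unitarity of $U_f$ lets us take all the factors to be genuine normalised single-qubit states, whatever the dimension or internal structure of $U_f$. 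We read off the factors $\ket{\psi_i}$, pass the corresponding tuple of classical two-dimensional units to $C_f$ (one black-box call), and reassemble the returned units into $U_f\ket{\psi}=\ket{\phi_1}\otimes\cdots\otimes\ket{\phi_m}$. This is exactly the recipe used for $n=1,2$, where $C_f$ was defined by analogy with $U_f$ on a separable input; the content of the theorem is that nothing about $U_f$ beyond ``separable in, separable out'' is used.

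The step I expect to need the most care is verifying that $C_f$ is well defined and that the reassembly just described is legitimate. A product state fixes its tensor factors only up to phases, which can be shuffled among the factors and into an overall global phase, so $C_f$ must be regarded as an operator on single-qubit states modulo phase (equivalently, the classical two-dimensional units inherit the same ambiguity). This is harmless, because the only observable output of $\mathcal{A}$ comes from the measurement gates and, by the Born rule, those are insensitive to the global phase --- the same observation already used in passing from $U_f$ to $U_{f'}$. One should also record the bookkeeping point that a single-qubit amplitude pair has four real parameters while the complex bit $a+bi$ has two: when the amplitudes occurring in $\mathcal{A}$ are real (as in the Deutsch--Jozsa algorithm) the identification $a\ket{0}+b\ket{1}\leftrightarrow a+bi$ is exact, and otherwise one simply carries the pair of complex amplitudes as the classical two-dimensional unit, a constant-factor change that does not alter the number of black-box calls. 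Combining these observations yields the claimed classical algorithm.
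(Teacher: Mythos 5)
Your proposal is correct and follows essentially the same route as the paper: construct a classical black-box $C_f$ that reproduces the action of $U_f$ on the tuple of single-qubit factors, carried as two-component complex vectors (amplitude pairs), and simulate the rest of $\mathcal{A}$ matrix-mechanically, so each quantum query is replaced by exactly one classical query. The paper phrases this by explicitly writing $U_f = U_f^{(1)}\otimes\cdots\otimes U_f^{(n)}$ and letting $C_f$ apply each $2\times 2$ block to the corresponding pair; your discussion of the phase ambiguity of the tensor factors plays the role that this fixed factorisation plays there, and is harmless for the same Born-rule reason.
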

\begin{proof}
	We can write the action of $U_f$ under these assumptions as
	\begin{equation}
		\label{eqn:deq:BBproof}
		U_f \ket{\psi_1}\ket{\psi_2}\cdots \ket{\psi_n} = \ket{\phi_1}\ket{\phi_2}\cdots \ket{\phi_n},
	\end{equation}
	and hence we can write $U_f = U_f^{(1)} \otimes U_f^{(2)} \otimes \cdots \otimes U_f^{(n)}$, where each $U_f^{(i)}$ acts on the $i$th qubit only. Since the quantum amplitudes can, in general, be complex-valued, we give a general construction of the classical black-box $C_f$ acting on complex-vectors rather than the complex-bits we used previously. This method amounts to simulating the matrix-mechanical formulation, much as in Theorem~\ref{thm:sim}. Let $$U_f^{(i)}=\begin{pmatrix}a & b\\c & d\end{pmatrix},$$ then define 
	$$C_f \begin{pmatrix}(\alpha_1,\beta_1)\\\vdots \\ (\alpha_n,\beta_n) \end{pmatrix} = \begin{pmatrix}(a\alpha_1 + b \beta_1,c\alpha_1 + d\beta_1)\\\vdots \\ (a\alpha_n + b \beta_n,c\alpha_n + d\beta_n) \end{pmatrix}.$$ A de-quantisation of $\mathcal{A}$ can then simulate the matrix evolution of the quantum algorithm, but replace queries to the quantum black-box by ones to the classical black-box represented by $C_f$ operating on $n$ two-component vectors. Thus the de-quantised algorithm will call the black-box exactly the same number of times as the quantum one.\qed
\end{proof}
\if01
\begin{theorem}
	Let $U_f$ be the unitary operator representing the black-box, and assume that for the set of input states used by the quantum algorithm $\mathcal{A}$ we wish to de-quantise, both the input and output $U_f$ are separable. Then the black-box can be de-quantised into an efficient classical solution.
\end{theorem}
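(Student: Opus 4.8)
The plan is to re-use the construction of the preceding theorem almost verbatim, adding two refinements: coping with separability being guaranteed only on the states that $\mathcal{A}$ actually feeds to the black-box (rather than on all of $\mathcal{H}_2^{\otimes n}$), and accounting for the cost of simulating the non-oracle part of $\mathcal{A}$ so that the word ``efficient'' is justified.

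First I would fix an input size $n$ and list the states $\ket{\psi^{(1)}},\dots,\ket{\psi^{(m)}}$ at which $\mathcal{A}$ queries $U_f$, one per black-box call. By hypothesis each $\ket{\psi^{(j)}}$ and its image $U_f\ket{\psi^{(j)}}$ is fully separable, say $\ket{\psi^{(j)}}=\ket{\psi^{(j)}_1}\cdots\ket{\psi^{(j)}_n}$ and $U_f\ket{\psi^{(j)}}=\ket{\phi^{(j)}_1}\cdots\ket{\phi^{(j)}_n}$. The key structural step is to show that, on the subspace spanned by these inputs, $U_f$ acts as a tensor product $U_f^{(1)}\otimes\cdots\otimes U_f^{(n)}$ of single-qubit operators. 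For Deutsch--Jozsa-type algorithms this is immediate: there $U_f$ restricted to the working register is the diagonal phase operator of Equation~\eqref{eqn:n2fcNot}, and pair product invariance (Theorems~\ref{thm:PPIRec} and~\ref{deq:thm:PPI}) forces exactly the product form displayed in Equation~\eqref{eqn:n2Uf}. In general one extends each single-qubit map $\ket{\psi^{(j)}_i}\mapsto\ket{\phi^{(j)}_i}$ to a $2\times2$ unitary $U_f^{(i)}$ and checks that the choices are mutually consistent.

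Given the matrices $U_f^{(i)}$ I would then define the classical black-box $C_f:(\IC^2)^n\to(\IC^2)^n$ exactly as in the previous proof: it reads $n$ two-component complex vectors and returns the vectors obtained by applying $U_f^{(i)}$ to the $i$th of them. The de-quantised algorithm keeps a classical description of the state of $\mathcal{A}$ as a list of small-block state vectors, applies the ordinary gates of $\mathcal{A}$ by matrix multiplication, and whenever $\mathcal{A}$ would call $U_f$ it calls $C_f$ on the current $n$-tuple of qubit vectors instead. Since $C_f$ acts blockwise, separability is preserved across black-box calls and the representation never blows up there; to make ``efficient'' precise I would invoke the $p$-blocked analysis of Theorem~\ref{thm:sim} for the non-oracle gates of $\mathcal{A}$, so that the total running time is $\text{poly}(T(n))$ while the number of black-box calls is unchanged. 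Finally one verifies that the measurement statistics produced by the simulation reproduce $\mathcal{P}$ exactly, or, if incomputable amplitudes have to be rounded, can be brought within any prescribed computable $\gamma>0$ of it; this is de-quantisation in the sense defined earlier.

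The main obstacle is the passage from ``$U_f$ separable on the inputs used'' to ``$U_f=U_f^{(1)}\otimes\cdots\otimes U_f^{(n)}$'': a unitary that maps some product states to product states need not be a global tensor product, so the argument has to stay on the span of the relevant inputs, or exploit the special diagonal/phase-kickback form of $U_f$ in the problems of interest. A secondary subtlety is the adjective ``efficient'': separability at the black-box alone says nothing about the cost of the remaining gates of $\mathcal{A}$, so the statement really needs a $p$-blocked hypothesis on those gates, and I would make that dependence explicit rather than leave it implicit.
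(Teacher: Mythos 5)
Your plan follows essentially the same route as the paper's own proof: write the black-box action on the separable states used by $\mathcal{A}$ as $U_f\ket{\psi_1}\cdots\ket{\psi_n}=\ket{\phi_1}\cdots\ket{\phi_n}$, factor $U_f$ into single-qubit unitaries $U_f^{(1)}\otimes\cdots\otimes U_f^{(n)}$, and implement each factor classically (via the blockwise complex-vector black-box $C_f$, or equivalently the optical wave-plate realisation the paper invokes), leaving the number of black-box calls unchanged. The two obstacles you flag --- that separability of the queried inputs and outputs does not by itself yield a global tensor-product form of $U_f$, and that ``efficient'' additionally requires control of the non-oracle gates (e.g.\ via the $p$-blocked analysis of Theorem~\ref{thm:sim}) --- are points the paper's proof passes over silently, so your explicit treatment is a refinement of, not a departure from, its argument.
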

\begin{proof}
	We can write the action of $U_f$ under these assumptions as
	\begin{equation}
		U_f \ket{\psi_1}\ket{\psi_2}\cdots \ket{\psi_n} = \ket{\phi_1}\ket{\phi_2}\cdots \ket{\phi_n}.
	\end{equation}
	Since $U_f$ is unitary, there must exist a unitary inverse $U_f^\dag$, which has the effect
	\begin{equation}
		\label{eqn:deq:BBproofDag}
		U_f^\dag \ket{\phi_1}\ket{\phi_2}\cdots \ket{\phi_n} = \ket{\psi_1}\ket{\psi_2}\cdots \ket{\psi_n}.
	\end{equation}
	From \eqref{eqn:deq:BBproof} we see that we each qubit undergoes the transformation $\ket{\psi_i} \to \ket{\phi_i}$, but we must show this transformation is unitary. Let us write $U_f = U_f^{(1)} \otimes U_f^{(2)} \otimes \cdots \otimes U_f^{(n)}$ Since the tensor product is distributive with respect to adjoints, we can write $U_f^\dag = (U_f^{(1)})^\dag \otimes (U_f^{(2)})^\dag \otimes \cdots \otimes (U_f^{(n)})^\dag$. Hence, we see that $U_f^{(i)}\ket{\psi_i} = \ket{\phi_i}$ and $(U_f^{(i)})^\dag \ket{\phi_i} = \ket{\psi_i}$. Thus the single qubit transformation is indeed unitary and we can write $U_f$ as the product of single-qubit unitary transformation above. It is then clear to see that de-quantisation of the black-box is possible. Evidently, the optical method we previously employed can trivially be used since each $U_f^{(i)} \in \text{SU}(2)$ and the black-box is then constructed out of wave plates. We only need to prepare the $n$ photons in the correct state, operate on them with the black-box and continue the rest of the algorithm with the transformed photons. Equivalently, a method using two-dimensional classical bits such as complex numbers could easily be used.\qed
\end{proof}\fi
This shows that, as expected, a quantum black-box algorithm can be de-quantised easily into an equivalent classical algorithm if the black-box never entangles the input. Note that while the de-quantised algorithm might be inefficient with respect to time, the number of black-box calls is the same as for the quantum algorithm. This black-box de-quantisation could be extended to allow bounded entanglement as in Theorem~\ref{thm:sim}. If entanglement is bounded throughout the whole algorithm, Theorem~\ref{thm:sim} will guarantee the de-quantisation is also efficient in time. These de-quantisation tools provide a useful technique of developing new classical algorithms from ones which are more naturally expressed in the quantum world. If unbounded entanglement is present, a successful de-quantisation will need to handle the problem using a different representation of the process.

\section{Conclusion}
We have examined the ability to de-quantise the Deutsch-Jozsa problem for various values of $n$ in order to gain a better understanding of quantum algorithms and the ability for them to give exponential improvements over classical algorithms. We have extended the de-quantisation presented in~\cite{Calude:2007p542} to the $n=2$ case by utilising separability of the quantum algorithm. We have shown that for $n>2$ there exist many balanced Boolean functions $f$ for which the output of $U_f$ is entangled. The fraction of balanced functions which are separable has been shown to approach zero very rapidly. This tells us that if we were to pick a random Boolean function which is constant or balanced, the probability of being able to learn information about the nature of the function through classical means in one black-box call tends to zero.

The systematic method of tackling quantum algorithms and searching for classical counterparts helps us to see where quantum algorithms get their power from. Trying to understand this is an extremely important step in the process of trying to devise new quantum algorithms. In order to make good quantum algorithms with ease, we need to have a much better understanding than we currently do about where their power comes from, and how to use this effectively. In our investigation we obtained conditions for which, if satisfied, indicate a black-box algorithm can be de-quantised. These kind of conditions, if explored further, are a step towards better measures of the usefulness of quantum algorithms.

An area that still needs to be looked into much further is that of the black-box complexity. It is not clear the standard procedure of ignoring the differences in complexity between quantum and classical black-boxes is entirely valid. Without a proper method to compare different types of black-box algorithms it is hard to convincingly claim one is better than the other, because they might not be solving quite the same problem. We see this as an important issue to investigate further.

\section*{Acknowledgment}
The author would like to thank Prof. Cristian Calude for motivating this research and many subsequent discussions about de-quantisation, as well as the anonymous referees for suggestions which improved paper.

\bibliographystyle{abbrv}

\end{document}